\documentclass[a4paper,USenglish,cleveref,numberwithinsect]{lipics-v2021}

\usepackage[utf8]{inputenc}
\usepackage{mathtools,stmaryrd,anyfontsize}
\usepackage{microtype}
\usepackage{booktabs}
\usepackage{tikz,float}

\hideLIPIcs
\nolinenumbers
\numberwithin{equation}{section}
\newcommand{\embed}{\lhook\joinrel\xrightarrow{}}
\newcommand{\just}{\lhook\joinrel\xrightarrow{\text{just}}}

\newcommand{\N}{\mathbb N}
\newcommand{\Z}{\mathbb Z}
\newcommand{\CompSubSeq}{\mathsf{CompSubSeq}}
\newcommand{\PSPACE}{\mathsf{PSPACE}}
\newcommand{\val}{\operatorname{val}}
\newcommand{\Eq}{\operatorname{\mathsf{Eq}}}
\newcommand{\Save}{\operatorname{\mathsf{Save}}}
\newcommand{\Ret}{\operatorname{\mathsf{Return}}}
\newcommand{\Add}{\operatorname{\mathsf{Add}}}
\newcommand{\Sub}{\operatorname{\mathsf{Sub}}}
\newcommand{\enc}{\mathsf{c}}
\newcommand{\blk}{\mathsf{b}}
\newcommand{\evalprog}[1]{\llbracket #1\rrbracket}
\newcommand{\ind}[1]{\mathbf{1}[#1]}

\title{Compressed Subsequence Checking is PSPACE-complete}
\titlerunning{Compressed Subsequence Checking is PSPACE-complete}
\author{Markus Lohrey}{University of Siegen, Germany}{lohrey@eti.uni-siegen.de}{https://orcid.org/0000-0002-4680-7198}{}
\authorrunning{M. Lohrey}
\Copyright{Markus Lohrey}

\ccsdesc[500]{Theory of computation~Problems, reductions and completeness}
\ccsdesc[500]{Theory of computation~Formal languages and automata theory}
\keywords{Straight-line programs, compressed words, subsequence checking, PSPACE-completeness}

\begin{document}
\maketitle

\begin{abstract}
It is shown that the (scattered) subsequence problem for two words represented by
straight-line programs is $\PSPACE$-complete, even over a binary
alphabet. The lower bound is obtained by a polynomial-time reduction
from quantified subset sum. 
\end{abstract}

\section{Introduction}

A \emph{straight-line program} (SLP) is a context-free grammar with
exactly one production for each variable and an acyclic dependency
relation. It represents the unique word derived from its start variable.
By reusing variables, an SLP can describe repeated factors only once and
represent a word whose length is exponential in the size of the grammar.
SLPs therefore provide a natural model for studying algorithms that work
directly on compressed words. The aim is to solve problems in time
polynomial in the compressed input size, without first expanding the
represented words; see the survey~\cite{Lohrey2012}.

Several fundamental results explain the importance of this model.
The first polynomial time algorithm for testing whether two SLPs produce the same word
was developed independently by Hirshfeld et al.~\cite{HirshfeldJM94}, Mehlhorn et al.~\cite{MehlhornSU94}, and
Plandowski~\cite{Plandowski1994}. Rytter~\cite{Rytter2003}
and Charikar et al.~\cite{CharikarEtAl2005} established close connections
between Lempel--Ziv compression and small grammars. In particular, a
Lempel--Ziv parse can be converted into an SLP with only a logarithmic
increase in size. These connections allow algorithms for SLP-compressed
words to be applied to other compressed representations as well.

A central problem in this area is \emph{compressed pattern matching}:
given an explicitly represented pattern $p$ and an SLP for a text $t$,
one asks whether $p$ occurs as a contiguous factor of $t$.
Ganardi and Gawrychowski~\cite{GanardiGawrychowski2022} proved that this
problem can be solved in time $\mathcal{O}(g_t+|p|)$, where $g_t$ is the
size of the SLP for the text. Thus, even when the text is exponentially
long, the running time is linear in the total size of the input.

In \emph{fully compressed pattern matching}, both the pattern and the
text are represented by SLPs. Karpi\'nski, Rytter, and
Shinohara~\cite{KarpinskiRytterShinohara1995} gave the first
polynomial-time algorithm for this problem. Writing $g_p$ and $g_t$ for
the sizes of the SLPs for the pattern and the text, respectively,
Miyazaki, Shinohara, and Takeda~\cite{MiyazakiShinoharaTakeda1997}
obtained an algorithm with running time $\mathcal{O}(g_p^2g_t^2)$,
and Lifshits~\cite{Lifshits2007} improved this bound to
$\mathcal{O}(g_pg_t^2)$. Je\.{z}~\cite{Jez2015} subsequently used
recompression to obtain an algorithm with running time
$\mathcal{O}((g_p+g_t)\log(|p|+1))$. The algorithms of Lifshits and
Je\.{z} also compute a compact representation of all occurrences, whose number may be
exponential in the compressed input size. The running-time bounds stated
here use the word-RAM model over a fixed alphabet, with machine words
large enough to store positions in the uncompressed words.

Allowing gaps between consecutive letters of an occurrence leads to the
\emph{compressed subsequence problem} studied in this paper.
In the fully compressed subsequence problem, the input consists of two
straight-line programs that represent words $u$ and $v$, and the question
is whether $u$ is a subsequence of $v$. Equivalently, $u$ must be obtained
by deleting letters at arbitrary positions from $v$. This problem is
also called the fully compressed embedding problem.

Lifshits and Lohrey proved that fully compressed embedding is
$\Theta_2^p$-hard and belongs to
$\PSPACE$~\cite[Theorem~4 and Proposition~1]{LifshitsLohrey2006}; later the lower bound
was strengthened to
$\mathsf{PP}$-hardness~\cite[Theorem~13]{Lohrey2011}.
In this paper, we prove the following result. We denote the fully
compressed subsequence problem over an alphabet $\Sigma$ of size $n$ by $\CompSubSeq_n$.

\begin{theorem}\label{thm:main}
For every $n \geq 2$, $\CompSubSeq_n$ is $\PSPACE$-complete under
polynomial-time many-one reductions. 
\end{theorem}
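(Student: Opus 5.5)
The upper bound is already available: membership of fully compressed embedding in $\PSPACE$ is \cite[Proposition~1]{LifshitsLohrey2006}, and it holds for any alphabet. Hardness for a larger alphabet follows from hardness for $n=2$, since a binary instance is also an instance over any $\Sigma$ with $|\Sigma|\ge 2$. So the whole task is a polynomial-time reduction from a $\PSPACE$-complete problem to $\CompSubSeq_2$. Following the abstract, I would use \emph{quantified subset sum}. Its instances have the form $Q_1x_1\cdots Q_kx_k\colon \sum_{i} x_i a_i = t$, with $x_i\in\{0,1\}$, binary-encoded weights $a_i\in\N$, and quantifiers $Q_i\in\{\exists,\forall\}$. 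This problem is known to be $\PSPACE$-complete.

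The construction would proceed inductively from the innermost variable outward. For each suffix of the quantifier prefix and each partial sum $s$, I want SLP-compressible words $u_j$ and $v_j(s)$ such that $u_j$ embeds into $v_j(s)$ (or into a suitable factor of it) exactly when the formula $Q_{j}x_j\cdots Q_kx_k$ holds with current offset $s$. Embedding is a greedy process: the leftmost embedding consumes a prefix of the text. I would therefore encode a number $s$ as a shift, i.e.\ a block of $s$ letters $b$ or a padding offset, and encode the test ``$s=t$'' by a pattern of the form $a\,b^{t}a$ matched against a text segment in which such a shift of $b$'s must align exactly.

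An existential quantifier should behave like a choice. The text offers two consecutive variants, with shift $0$ and with shift $a_j$, and the greedy embedding may use whichever works. A universal quantifier should behave like a conjunction. The pattern is a concatenation of two copies of the inner pattern, one per value of $x_j$, and the text must accommodate both in sequence. Because only concatenations of the previous level's words are needed, each level adds $\mathcal O(1)$ productions plus power productions for $b^{a_j}$. The latter cost $\mathcal O(\log a_j)$ via repeated squaring, so the SLPs stay polynomial.

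The main obstacle is controlling the greedy leftmost embedding so that the two levels genuinely implement ``or'' and ``and''. In the conjunction case, the leftover position after embedding the first copy must be reset to a canonical position that does not depend on which existential branches were taken inside. Otherwise the shifts leak into the second copy and the counting breaks. I would first try separator gadgets: long blocks such as $a^{M}$ with $M$ exceeding all possible partial sums (which is still polynomial-size in binary) that absorb any slack, so that after each sub-embedding the greedy pointer is synchronized. I would state an invariant lemma of the form ``the leftmost embedding of $u_j$ into $v_j$ starting at offset $s\le S$ ends exactly at position $p_j$ if the subformula is true and fails otherwise'', proved by induction on $j$. Getting the existential case to also end at a canonical position, rather than at two different positions depending on the branch, is the delicate point. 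I expect to need padding complementary to the shift, e.g.\ $b^{a_j}$ on one branch balanced by $b^{A-a_j}$ afterwards, and I would verify this invariant carefully with only two letters available.
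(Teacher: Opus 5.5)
Your membership argument, the passage from $n=2$ to larger alphabets, and the choice of quantified subset sum as the source problem are fine and agree with the paper. The gap is that the hardness construction is never actually given. The step you postpone as ``the delicate point'' is the whole difficulty, and the invariant you propose does not compose at existential nodes. It says that the embedding of $u_j$ from offset $s$ ends at a canonical position $p_j$ if the subformula is true and fails otherwise. The greedy embedding is deterministic and the pattern pointer never moves back, so with one copy of $u_{j+1}$ against two consecutive text variants it does not ``choose'' a variant. If the branch $x_j=0$ is false, it matches some prefix of $u_{j+1}$ inside the first variant. The second variant then sees only the remaining suffix, at an uncontrolled offset, so it no longer tests $x_j=1$. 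Your invariant says nothing about this intermediate situation, and separator blocks such as $a^M$ cannot repair it: they can resynchronise pointers, but at an existential node a false branch must not cause failure at all. So the truth value has to be recorded in the position itself. (Also, words $v_j(s)$ that genuinely depend on $s$ would be exponentially many and could not share SLP variables; only the reading with a single text $v_j$ and an offset is viable.)

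The missing idea is to never let the embedding fail before the very end and to carry all information in the offset. The paper exploits the fact that every text block acts on the offset as a monotone map. The pattern is the fixed word $(0^M1^M)^\ell00$ and contains no formula structure. \Cref{lem:simulation-1} shows that greedily matching a pattern block $0^M1^M$ against the text block $\blk_M(f)$ turns the offset $z$ (the number of $0$s of the next pattern block already matched) into $f(z)$, for any admissible $f$. The state is laid out as $z=Bs+\sum_j C_jb_j+r$, so that every needed step is monotone. $\Add_i$ and $\Sub_i$ shift the partial sum, and $\Eq$ changes only the lowest component, setting $r=1+\ind{s=t}$. $\Save_i$ stores the left child's bit at scale $C_i$, and $\Ret_i$ computes $\vee$ or $\wedge$ by a threshold ($\theta_i=2$ or $\theta_i=C_i+2$). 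Both quantifier types use the same shape of instruction word $E_i=E_{i+1}\Save_i\Add_iE_{i+1}\Sub_i\Ret_i$, differing only in the threshold of $\Ret_i$, and these words form an SLP of size $\mathcal{O}(n)$. The only point where the embedding can fail is the final test $\evalprog{E_1}(1)\geq 2$, which is equivalent to $u\embed v$ by \eqref{eq:final-threshold}.
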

\Cref{thm:main} settles the open problem on the precise complexity of
fully compressed subsequence matching raised in~\cite{GanardiSaglamZetzsche2024,LifshitsLohrey2006,Lohrey2012}.

We prove the lower bound by a reduction from quantified subset sum~\cite{Travers2006}.
First, we show how to represent a monotone map on a finite integer
interval by a binary word. Reading this word in the greedy embedding
algorithm applies the map to an offset in the pattern. We then construct
a recursive sequence of such maps that evaluates the quantified
subset sum instance. The two branches of a quantifier use the same
sequence of maps, but start with different offsets. Finally, we give
explicit polynomial-size straight-line programs for the binary words
representing all maps used in the construction.

\section{Preliminaries}\label{sec:preliminaries}

Let $\N=\{0,1,2,\ldots\}$. For integers $a,b$, write
$[a,b]=\{z\in\Z:a\leq z\leq b\}$; in particular,
$[a,b]=\varnothing$ if $a>b$. All intervals in the following are
integer intervals. 
For functions $f : A \to B$ and $g : B \to C$ we define their composition $f \circ g : A \to C$
by $(f \circ g)(a) = g(f(a))$ for $a \in A$.
For a statement $\mathcal S$ that can be either true or false, let $\ind{\mathcal S}$
be $1$ if $\mathcal S$ is true and $0$ otherwise. All logarithms are
to base two.

\subsection{Words}

Let $\Sigma$ be a finite alphabet. The empty word is denoted by $\varepsilon$.
For $w\in \Sigma^*$, let $|w|$ be its length and let $w[i]$ be its $i$-th
letter, where $i\in[1,|w|]$. A word $u=a_1\cdots a_m$ is a
\emph{subsequence} of $v$ (or $u$ \emph{embeds} into $v$), written $u\embed v$, if there are positions
$1\leq i_1<\cdots<i_m\leq|v|$ with $v[i_j]=a_j$ for all
$j\in[1,m]$. One can check $u \embed v$ using a simple greedy embedding algorithm that maps every
symbol $a_i$ to the earliest possible position in $v$. More precisely, $a_1$ is mapped to the first
position $j_1$ with $a_1 = v[j_1]$ (if it exists), then $a_2$ is mapped to the first position $j_2>j_1$ with $a_2 = v[j_2]$ (if it exists), and so on.
This algorithm works in linear time and logarithmic space. For the latter, notice that it suffices to store
a pointer into $u$ as well as a pointer into $v$.

We say that $u \neq \varepsilon$ \emph{just embeds} into $v \neq \varepsilon$, $u \just v$ in symbols,
if $u \embed v = wa$ for $a \in \Sigma$ but $u \not\embed w$.
This means that the above greedy algorithm maps the last symbol of $u$ to the last symbol of $v$.
Note that $u_1 \just v_1$ and $u_2 \just v_2$ imply $u_1u_2 \just v_1v_2$.

\subsection{Straight-line programs} \label{sec:slp}

A \emph{straight-line program} (SLP) over $\Sigma$ is a context-free grammar
$\mathcal G=(V,\Sigma,S,P)$ with exactly one production for every variable
$X\in V$, such that the dependency relation between variables is acyclic.
Every variable $X$ generates exactly one word, denoted by
$\val_{\mathcal G}(X)$, and $\val(\mathcal G)=\val_{\mathcal G}(S)$.
We omit the subscript $\mathcal{G}$ when the SLP is clear from the context. The size
of an SLP is the sum of the lengths of its right-hand sides.

We may assume that all productions are of the form $X\to a$ or
$X\to YZ$.\footnote{This does not allow producing the empty word, which is not a problem for our purpose.} This corresponds to the well-known Chomsky normal form.
 In this normal form, the size and the number of
productions differ by at most a constant factor. We also allow powers
$X^k$, with $k\in\N$ given in binary notation, in right-hand sides.
Such an expression can be easily replaced by $\mathcal{O}(\log(k+1))$ ordinary
productions using iterated squaring; see also
\cite[Section~2]{LifshitsLohrey2006}.

The problem $\CompSubSeq_n$ is defined as follows, where $\Sigma$ is an alphabet of size $n$:
\begin{quote}
\textsc{Input:} Two SLPs $\mathcal P,\mathcal T$ over $\Sigma$.\\
\textsc{Question:} Does $\val(\mathcal P)\embed\val(\mathcal T)$ hold?
\end{quote}
It is easy to see that this problem belongs to 
$\PSPACE$ \cite[Proposition~1]{LifshitsLohrey2006}. One can run the above-mentioned
greedy embedding algorithm on the words $\val(\mathcal P)$ and $\val(\mathcal T)$ without
computing $\val(\mathcal P)$ and $\val(\mathcal T)$ explicitly. The pointers into $\val(\mathcal P)$ and $\val(\mathcal T)$, respectively,
need polynomial space.

\subsection{Quantified subset sum}\label{sec:source-problem}

We use the quantified subset sum problem as defined by 
Travers~\cite[p.~214]{Travers2006}:
\begin{quote}
\textsc{Input:} Binary-encoded $w_1, \ldots, w_n, t \in \mathbb{N}$ and quantifiers $Q_1, \ldots, Q_n \in \{\exists, \forall \}$ \\
\textsc{Question:} Does $\ind{\Phi}=1$ hold for the formula
\begin{equation}\label{eq:source}
 \Phi=Q_1x_1\in\{0,1\}\ \cdots\ Q_nx_n\in\{0,1\}:
  \sum_{i=1}^n x_iw_i=t.
\end{equation}
\end{quote}
By a reduction from quantified Boolean satisfiability, it is shown in \cite{Travers2006} that
quantified subset sum is $\PSPACE$-complete.

\section{Proof of \texorpdfstring{\Cref{thm:main}}{Theorem 1.1}}

In this section we prove \Cref{thm:main}. By \cite[Proposition~1]{LifshitsLohrey2006} it suffices to 
show $\PSPACE$-hardness. For this we reduce quantified subset sum to $\CompSubSeq_2$.
Let us fix the formula $\Phi$ from \eqref{eq:source} with 
$w_1, \ldots, w_n, t \in \mathbb{N}$ and $Q_1, \ldots, Q_n \in \{\exists, \forall \}$.
 Let $S=\sum_{i=1}^n w_i$.
We can assume that $t \in [0,S]$ and $n\geq1$; other cases are trivial.

Let us give a short overview of the proof, before we go into the details. 
The formula $\Phi$ in \eqref{eq:source} can be unfolded into an exponentially large Boolean formula tree $T_\Phi$ by replacing universal (resp., existential)
quantifiers by conjunctions (resp., disjunctions). The leaves of the tree $T_\Phi$ are identities $s=t$, where $s$ is a partial
sum of some of the weights $w_i$. Every such identity can be true or false. 
For instance, for $\Phi = \forall x_1\in\{0,1\}\ \exists x_2\in\{0,1\}:
 x_1+x_2=1$ the tree is shown in Figure~\ref{fig:tree}.

\begin{figure}[t]
\centering
\begin{tikzpicture}[
  every node/.style={font=\normalsize},
  gate/.style={draw,circle,minimum size=7mm,inner sep=0pt},
  edge label/.style={font=\small,inner sep=1.5pt},
  line width=0.5pt
]
  \node[gate] (root) at (0,0) {$\wedge$};
  \node[gate] (left) at (-2.3,-1.3) {$\vee$};
  \node[gate] (right) at (2.3,-1.3) {$\vee$};
  \node (leaf00) at (-3.5,-2.6) {$0=1$};
  \node (leaf01) at (-1.1,-2.6) {$1=1$};
  \node (leaf10) at (1.1,-2.6) {$1=1$};
  \node (leaf11) at (3.5,-2.6) {$2=1$};

  \draw (root) -- node[edge label,above left] {$x_1=0$} (left);
  \draw (root) -- node[edge label,above right] {$x_1=1$} (right);
  \draw (left) -- node[edge label,above left] {$x_2=0$} (leaf00);
  \draw (left) -- node[edge label,above right] {$x_2=1$} (leaf01);
  \draw (right) -- node[edge label,above left] {$x_2=0$} (leaf10);
  \draw (right) -- node[edge label,above right] {$x_2=1$} (leaf11);
\end{tikzpicture}
\caption{The Boolean formula tree $T_\Phi$ for
$\Phi=\forall x_1\in\{0,1\}\ \exists x_2\in\{0,1\}:x_1+x_2=1$.}
\label{fig:tree}
\end{figure}

The tree $T_\Phi$ can be evaluated using a depth-first-left-to-right
traversal and a stack of height $n$ (the number of quantifiers in \eqref{eq:source}). This evaluation process will be simulated by a sequence
of functions $f_1, f_2, \ldots, f_\ell$ with $f_i : [0,M] \to [0,M]$ (we will call these functions `instructions').
Every $f_i$ is monotone and has some additional properties that are defined in Section~\ref{sec:simulation}. The sequence
$f_1, f_2, \ldots, f_\ell$ will be constructed in Sections~\ref{sec:maps}--\ref{sec:program}.
The numbers $\ell$ and $M$ are exponential in the size of the formula $\Phi$.
The precise correspondence between the formula \eqref{eq:source} and the sequence of functions
$f_1, f_2, \ldots, f_\ell$ is that $f_\ell(f_{\ell-1}(\cdots f_1(1)\cdots)) \geq 2$ if and only if $\Phi$ is true.

Due to the special properties of the functions $f_i$, we can construct from the sequence $f_1, f_2, \ldots, f_\ell$
two exponentially long words $u, v \in \{0,1\}^*$ such that
$f_\ell(f_{\ell-1}(\cdots f_1(1)\cdots)) \geq 2$ if and only if $u \embed v$. This will be done in \Cref{sec:simulation}.
All that remains to be shown is that $u$ and $v$ can be obtained by straight-line programs that can be constructed
in polynomial time from the formula $\Phi$. For $u$, this is easy; it is $(0^M 1^M)^\ell 00$. For $v$ we argue 
as follows: First of all, the sequence of instructions $f_1 f_2 \cdots f_\ell$ can be produced by a small SLP due to the regular
structure of the tree $T_\Phi$. Moreover, $v$ is obtained by replacing in the sequence $f_1 f_2 \cdots f_\ell$ every $f_i$ by
a code $\blk_M(f_i) \in \{0,1\}^*$ (in addition we  have to add a leading $0$ to $v$). Moreover, for each of the instructions that we use, the code $\blk_M(f_i)$ can be also
produced by a small SLP; see \Cref{sec:codes}. This concludes our proof sketch.

\subsection{Monotone maps and subsequence embeddings}\label{sec:simulation}

A function $f:[a,b]\to[a,b]$ on a nonempty integer interval $[a,b]$
 is \emph{monotone} if for all $x,y \in [a,b]$, $x \leq y$ implies $f(x)\leq f(y)$.
The composition of two monotone functions
$f,g:[a,b]\to[a,b]$ is again monotone. 
Note that being monotone does not imply $f(x)\geq x$ for all $x$ in the domain of $f$.

For an integer $M\geq1$, a function $f:[0,M]\to[0,M]$ is called
\emph{admissible} (on $[0,M]$) if it is monotone,
$f(0)=0$, $f(M)=M$ and $f([1,M-1]) \subseteq [1,M-1]$.
For such a function, we define the words
 \begin{eqnarray}\label{eq:code}
 \enc_M(f) &=& 1^{f(M)-f(M-1)}0 1^{f(M-1)-f(M-2)}0 \cdots 1^{f(1)-f(0)}0 \quad\text{and}\\
  \blk_M(f) &=& \enc_M(f)(01)^{M-1} 0.
\end{eqnarray}
Each $\enc_M(f)$ contains exactly
$M$ zeros and $M$ ones, so $|\blk_M(f)|=4M-1$. Moreover, note that $\enc_M(f)$ starts with $1$ (since $f(M-1) < M = f(M)$) and ends
with $10$  (since $f(1) > 0 = f(0)$).
We will use this encoding only for maps $f$ whose words $\enc_M(f)$ have polynomial-size SLPs. 

\begin{lemma}\label[lemma]{lem:simulation-1}
Assume that $u\, 0^z \just v$ for words $u,v$ and $z \in [1,M-1]$. 
Let $f$ be an admissible function on $[0,M]$.
Then $u\, 0^M 1^M 0^{f(z)} \just v \,\blk_M(f)$ holds.
\end{lemma}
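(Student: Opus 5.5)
The plan is to reduce the statement to a local computation on $\blk_M(f)$ alone, using the compositionality of just embeddings noted in the preliminaries. Since $u\,0^z\just v$, the greedy algorithm has consumed $u\,0^z$ exactly when it reaches the last letter of $v$. So it continues on $\blk_M(f)$ from its first letter with the remaining pattern $0^{M-z}1^M0^{f(z)}$. Hence, by the remark that $u_1\just v_1$ and $u_2\just v_2$ imply $u_1u_2\just v_1v_2$, applied with $u_1=u\,0^z$, $v_1=v$, $u_2=0^{M-z}1^M0^{f(z)}$ and $v_2=\blk_M(f)$, it suffices to show
\[
0^{M-z}\,1^M\,0^{f(z)} \just \blk_M(f).
\]
Note that $u_1u_2 = u\,0^M1^M0^{f(z)}$, as required.

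For this I trace the greedy algorithm through the three blocks of the pattern. Write $d_k=f(k)-f(k-1)\ge 0$, so that $\enc_M(f)=1^{d_M}0\,1^{d_{M-1}}0\cdots 1^{d_1}0$.

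\emph{First block.} The block $0^{M-z}$ is matched to the zeros terminating the blocks $1^{d_M}0,\dots,1^{d_{z+1}}0$. This uses $M-z\ge 1$ zeros out of the $M$ zeros of $\enc_M(f)$, since $z\in[1,M-1]$. The unread remainder is $1^{d_z}0\cdots1^{d_1}0\,(01)^{M-1}0$.

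\emph{Second block.} This remainder contains $d_z+\cdots+d_1=f(z)-f(0)=f(z)$ ones before $(01)^{M-1}$. The greedy algorithm takes all of them for the first $f(z)$ letters of $1^M$, skipping the intervening zeros. The remaining $M-f(z)$ ones are taken from the first $M-f(z)$ copies of $01$. By admissibility $f(z)\in[1,M-1]$, so $1\le M-f(z)\le M-1$ and this count fits. The unread suffix is then $(01)^{f(z)-1}0$.

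\emph{Third block.} This suffix contains exactly $f(z)$ zeros and ends with $0$. So $0^{f(z)}\just(01)^{f(z)-1}0$, with the last zero landing on the last letter.

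The only delicate point is the bookkeeping in the second block: one must make sure that exactly $f(z)$ ones lie in the unread part of $\enc_M(f)$, which is a telescoping sum over $d_z,\dots,d_1$. One must also make sure the ones of the padding $(01)^{M-1}$ neither run out nor leave the wrong number of zeros. Both follow from $1\le f(z)\le M-1$, which is exactly where $f([1,M-1])\subseteq[1,M-1]$ and $z\in[1,M-1]$ are used. Some $d_k$ may be $0$, but that causes no problem.
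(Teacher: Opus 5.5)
Your proposal is correct and follows essentially the same route as the paper's proof. Both match $0^{M-z}$ to the first $M-z$ terminating zeros of $\enc_M(f)$, the first $f(z)$ letters of $1^M$ to the remaining ones of $\enc_M(f)$, the other $M-f(z)$ ones to the first $M-f(z)$ copies of $01$, and $0^{f(z)}$ to $(01)^{f(z)-1}0$, invoking $f(z)\in[1,M-1]$ at the same point. The only difference is cosmetic: you first split off $u\,0^z \just v$ via the concatenation property, whereas the paper carries $u$ and $v$ along through each step.
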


\begin{proof}
The power $0^{M-z}$ just embeds into the prefix
\[ 
1^{f(M)-f(M-1)}0 1^{f(M-1)-f(M-2)}0 \cdots 1^{f(z+1)-f(z)}0
\]
of $\blk_M(f)$.
Hence, we have
\[
u \,0^{M} \just v 1^{f(M)-f(M-1)}0 1^{f(M-1)-f(M-2)}0 \cdots 1^{f(z+1)-f(z)}0.
\]
The remaining suffix $1^{f(z)-f(z-1)}0 1^{f(z-1)-f(z-2)}0 \cdots 1^{f(1)-f(0)}0$ of 
$\enc_M(f)$ contains exactly $f(z)$ many $1$s. Hence 
\[ u\, 0^M 1^{f(z)} \just v 1^{f(M)-f(M-1)}0 1^{f(M-1)-f(M-2)}0 \cdots 1^{f(1)-f(0)}. \]
The word on the right hand side is $v\, \enc_M(f)$ without the $0$ at the end.
Note that $f(z) \in [1,M-1]$ since $f$ is admissible and $z \in [1,M-1]$. Hence $M-f(z) > 0$ and we obtain
\[ u \, 0^M 1^{M} = u\, 0^M 1^{f(z)} 1^{M-f(z)} \just v \, \enc_M(f) (01)^{M-f(z)}. \]
Finally, this yields
$u \, 0^M 1^{M} 0^{f(z)} \just v\, \enc_M(f) (01)^{M-f(z)}  (01)^{f(z)-1} 0 = v\, \blk_M(f)$.
\end{proof}
A straightforward inductive application of \Cref{lem:simulation-1} yields:
\begin{lemma}\label[lemma]{lem:simulation}
Let $\ell\geq1$, let $f_1, \ldots, f_\ell$ be admissible functions on $[0,M]$ and $z \in [1,M-1]$.
Define $f = f_1 \circ f_2 \circ \cdots \circ f_\ell$.
If $u \, 0^z \just v$ for words $u,v$ then 
\[u(0^M 1^M)^\ell 0^{f(z)} \just v \,\blk_M(f_1) \blk_M(f_2) \cdots \blk_M(f_\ell).\]
\end{lemma}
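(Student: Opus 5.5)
We argue by induction on $\ell$, applying \Cref{lem:simulation-1} once per instruction. Recall the paper's composition convention: $(f_1\circ f_2\circ\cdots\circ f_\ell)(z)=f_\ell(\cdots f_2(f_1(z))\cdots)$, so $f_1$ is applied first. This matches the order in which the blocks $\blk_M(f_1),\blk_M(f_2),\ldots$ appear in the text word. For $k\in[0,\ell]$ set $g_k=f_1\circ\cdots\circ f_k$, with $g_0$ the identity. We prove for all $k\in[0,\ell]$ the statement
\[
u(0^M1^M)^k 0^{g_k(z)} \just v\,\blk_M(f_1)\cdots\blk_M(f_k)
\quad\text{and}\quad g_k(z)\in[1,M-1].
\]
The case $k=\ell$ is the claim of the lemma.

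For the base case $k=0$, the first part is exactly the hypothesis $u\,0^z\just v$, and the second part is $z\in[1,M-1]$.

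For the inductive step from $k-1$ to $k$ (with $k\geq1$), put $u'=u(0^M1^M)^{k-1}$, $v'=v\,\blk_M(f_1)\cdots\blk_M(f_{k-1})$ and $z'=g_{k-1}(z)$. By the induction hypothesis, $u'0^{z'}\just v'$ and $z'\in[1,M-1]$. Since $f_k$ is admissible on $[0,M]$, \Cref{lem:simulation-1} applies and gives
\[
u'0^M1^M0^{f_k(z')}\just v'\,\blk_M(f_k).
\]
Here $u'0^M1^M=u(0^M1^M)^k$, and $f_k(z')=f_k(g_{k-1}(z))=g_k(z)$ by the composition convention. It remains to keep the invariant $g_k(z)\in[1,M-1]$ for the next step. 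This holds because admissibility gives $f_k([1,M-1])\subseteq[1,M-1]$.

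The only real point of care is this invariant. \Cref{lem:simulation-1} needs the offset to lie in $[1,M-1]$, so the induction hypothesis must carry that range, and admissibility is exactly what preserves it. The other thing to watch is the left-to-right composition order, so that $f_1$ is the first map applied and pairs with the block $\blk_M(f_1)$ that comes first in the text word.
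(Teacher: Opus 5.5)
Your proof is correct and is exactly the inductive application of \Cref{lem:simulation-1} that the paper calls ``straightforward'' and does not write out. The two details you make explicit are precisely what that induction needs: the invariant $g_k(z)\in[1,M-1]$, preserved by admissibility, and the left-to-right composition convention matching the order of the blocks.
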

We apply \Cref{lem:simulation} for $u=\varepsilon$, $v = 0$ and $z=1$, which yields
$u 0^z \just v$ and hence
\[(0^M 1^M)^\ell 0^{f(z)} \just 0 \,\blk_M(f_1) \blk_M(f_2) \cdots \blk_M(f_\ell).\]
This implies that
\begin{equation}\label{eq:final-threshold}
(0^M 1^M)^\ell 00 \embed 0 \,\blk_M(f_1) \blk_M(f_2) \cdots \blk_M(f_\ell) \quad \Longleftrightarrow \quad f(z) \geq 2.
\end{equation}

\subsection{Simulation of quantified subset sum by admissible functions}\label{sec:maps}

Recall that $n \ge 1$ is the number of variables in the quantified subset sum formula \eqref{eq:source} and $S = \sum_{i=1}^n w_i$.
We define for all $i \in [1,n]$:
\begin{equation}\label{eq:parameters}
 A=2^n,\quad B=4A = 2^{n+2} ,\quad 
 C_i=\frac{B}{2^i} = 2^{n+2-i} ,\quad D_i=2C_i, \quad M=B(S+1).
\end{equation}
Thus $C_i \ge 4$, $4\!\mid\! C_i$, $M \ge B \ge 8$,
$D_i\!\mid\! B$ and $B \!\mid\! M$, and
\begin{equation}\label{eq:weights}
 \sum_{i=1}^n C_i= 4 \sum_{i=1}^n 2^{n-i} = 4 \sum_{i=0}^{n-1} 2^{i} = 
 4 (2^{n}-1) = B-4.
\end{equation}
 All the numbers in \eqref{eq:parameters} have $\mathcal{O}(n+\log(S+1))$ bits.

\begin{figure}[t]
\centering
\begin{tikzpicture}[
  line width=0.5pt,
  dot/.style={circle,fill,inner sep=0pt,minimum size=4pt},
  edge label/.style={font=\small,inner sep=2pt},
  leaf label/.style={below,inner sep=0.5pt}
]
  \node[dot] (n1) at (0,0) {};
  \node[dot] (n2) at (1.3,-1.5) {};
  \node[dot] (n3) at (2.6,-3) {};
  \node[dot] (n4) at (1.3,-4.5) {};
  \node[dot] (n5) at (2.6,-6) {};
  \node[leaf label] (v) at (3.9,-7.5) {$v$};

  \coordinate (b1) at (-1.3,-1.5);
  \coordinate (b2) at (0,-3);
  \coordinate (other3) at (3.9,-4.5);
  \coordinate (b4) at (0,-6);
  \coordinate (b5) at (1.3,-7.5);

  \draw (n1) -- node[edge label,above left] {$x_1=0$} (b1)
    node[leaf label] {$b_1$};
  \draw (n1) -- node[edge label,above right] {$x_1=1$} (n2);

  \draw (n2) -- node[edge label,above left] {$x_2=0$} (b2)
    node[leaf label] {$b_2$};
  \draw (n2) -- node[edge label,above right] {$x_2=1$} (n3);

  \draw (n3) -- node[edge label,above left] {$x_3=0$} (n4);
  \draw (n3) -- node[edge label,above right] {$x_3=1$} (other3);

  \draw (n4) -- node[edge label,above left] {$x_4=0$} (b4)
    node[leaf label] {$b_4$};
  \draw (n4) -- node[edge label,above right] {$x_4=1$} (n5);

  \draw (n5) -- node[edge label,above left] {$x_5=0$} (b5)
    node[leaf label] {$b_5$};
  \draw (n5) -- node[edge label,above right] {$x_5=1$} (v);
  
\end{tikzpicture}
\caption{A path from the root of $T_\Phi$ to a node $v$. Every $b_i$ is the truth value ($0$ or $1$) of the corresponding subtree.}
\label{fig:state}
\end{figure}

We will mainly deal with admissible functions $f :[0,M]\to[0,M]$ in the following.
Such functions will be also called \emph{instructions}.
During the execution of the instruction sequence that we are going to construct, the current number $z \in [0,M]$ will be called the \emph{state}.
It  will always have the form
\begin{equation}\label{eq:state}
 z=Bs+\sum_{j=1}^n C_jb_j+r,
 \qquad 0\leq s\leq S,\quad b_j\in\{0,1\},\quad r\in\{1,2\}.
\end{equation}
Let us explain the meaning of the numbers $s$, $b_1$, \ldots, $b_n$, $r$.
Recall the Boolean formula tree $T_\Phi$ that is obtained from unfolding $\Phi$.
The instruction sequence that we are going to construct evaluates $T_\Phi$
by doing a depth-first-left-to-right traversal of $T_\Phi$. Consider the situation in Figure~\ref{fig:state}, which shows a path from the root
of $T_\Phi$ to a node $v$ that is not necessarily a leaf of $T_\Phi$.
 The bits $b_1, b_2, b_4, b_5 \in \{0,1\}$ 
are the truth values to which the corresponding subtrees of $T_\Phi$ evaluate. These bits 
are needed for evaluating the nodes on the path from the root to $v$
and therefore
have to be stored
in the state $z$ when the traversal is in the subtree rooted at $v$.
At the time instant when the tree traversal enters $v$ from its parent node,  the state is 
$(w_1+ w_2 + w_4 + w_5) \cdot B + (C_1 b_1 + C_2 b_2 + C_4 b_4 + C_5 b_5) + 1$. In particular, in \eqref{eq:state}
all bits $b_i$ with $i \notin \{1,2,4,5\}$ are zero. Note that $s := w_1 + w_2 + w_4 + w_5$ is the current partial sum; $w_3$ is not included
since the path from the root to $v$ takes the edge with $x_3=0$.
When the traversal goes from $v$ back to its parent node the
state $z$ will be $(w_1+ w_2 + w_4 + w_5) \cdot B + (C_1 b_1 + C_2 b_2 + C_4 b_4 + C_5 b_5) + 1 + \beta$, where 
$\beta \in \{0,1\}$ is the truth value to which the subtree rooted in $v$ evaluates. 
By~\eqref{eq:weights}, every state in~\eqref{eq:state} satisfies
\begin{equation}\label{eq:state-bound}
 1\leq z\leq BS+B-2=M-2.
\end{equation}
The numbers $s, b_1, \ldots, b_n, r$ in \eqref{eq:state}
are uniquely determined by the state $z$: the lower part 
$\sum_{j=1}^n C_jb_j+r$ is strictly less
than $B$, each $C_j$ is a different power of two and at least four, and
$r\in\{1,2\}$ occupies only the two least significant binary positions.

Recall that an instruction is an admissible map on $[0,M]$.
We often construct instructions from admissible mappings $h : [0,D] \to [0,D]$ for some $D\!\mid\! M$.
For this, we use the following lifting lemma:

\begin{lemma}\label[lemma]{lem:lifting}
Let $D \ge 1$ be a divisor of $M$.
For every $q \in [0, M/D-1]$, let
$h_q:[0,D]\to[0,D]$ be admissible and
define $f : [0,M] \to [0,M]$ by
$f(qD+r)=qD+h_q(r)$ for $r \in [0,D-1]$ and $f(M)=M$.
Then $f$ is admissible.
\end{lemma}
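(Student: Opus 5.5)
We verify the four defining properties of admissibility directly from the block structure of $f$. First, $f$ is well defined: every $x\in[0,M-1]$ has a unique representation $x=qD+r$ with $q\in[0,M/D-1]$ and $r\in[0,D-1]$, because $D\mid M$, and $x=M$ is handled separately. Next, $f$ maps into $[0,M]$: since $h_q(r)\in[0,D]$, we get $f(qD+r)\in[qD,(q+1)D]\subseteq[0,M]$. For the boundary values, taking $q=0$ and $r=0$ gives $f(0)=h_0(0)=0$, and $f(M)=M$ holds by definition.

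For monotonicity, note that each $f$ restricted to the block $[qD,qD+D-1]$ is $qD+h_q(r)$, which is monotone in $r$ because $h_q$ is. Across blocks, take $x=qD+r<y=q'D+r'$ with $q<q'$. Then $f(x)\le qD+h_q(D-1)\le qD+D=(q+1)D\le q'D\le f(y)$, where we use $h_q(D-1)\le D$ and $h_{q'}(r')\ge 0$. The remaining comparisons involve $y=M$ and are immediate, since $f(x)\le(q+1)D\le M$.

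It remains to check that $f([1,M-1])\subseteq[1,M-1]$. This is the only step needing care, because a block with $r=0$ sits at the bottom of its range. Let $x=qD+r\in[1,M-1]$. If $r\ge1$, we split into cases according to $D$. When $D\ge 2$ and $r\le D-1$, admissibility of $h_q$ gives $h_q(r)\in[1,D-1]$, so $f(x)\in[qD+1,qD+D-1]\subseteq[1,M-1]$. When $D=1$, the only possible value is $r=0$, so this case does not arise. If instead $r=0$, then $q\ge1$ because $x\ge1$, and $f(x)=qD+h_q(0)=qD$ since $h_q(0)=0$. This value satisfies $1\le qD\le M-D\le M-1$. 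In every case $f(x)\in[1,M-1]$, which completes the verification that $f$ is admissible.
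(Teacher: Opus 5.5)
Your proof is correct and follows essentially the same route as the paper: monotonicity within each block plus a cross-block comparison, the boundary values $f(0)=h_0(0)=0$ and $f(M)=M$, and the case split $r\ge1$ versus $r=0$ for the condition $f([1,M-1])\subseteq[1,M-1]$. You are more explicit than the paper (well-definedness, the $D=1$ edge case, and the upper bound $f(x)\le M-1$ that the paper dismisses as ``similar''), and you compare arbitrary points in different blocks rather than only adjacent block endpoints, but these differences are cosmetic.
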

\begin{proof}
Monotonicity within an interval $[qD, qD+D-1]$ follows from monotonicity of $h_q$.
At the upper interval boundary we have
\[
 f(qD+D-1)\leq qD+D-1<(q+1)D=f((q+1)D).
\]
We have $f(0) = h_0(0) = 0$. By definition, $f(M) = M$ holds.
For $z \in [1,M-1]$, write $z=qD+r$ with $0\leq r \le D-1$. If $r>0$, then
$f(z)=qD+h_q(r)\geq1$. If $r=0$, then $q\geq1$ and $f(z)=qD>0$. The condition
$f(z) \le M-1$ follows similarly.
\end{proof}
In the following we define the instructions that are used for the evaluation of the Boolean formula tree $T_\Phi$.

\subsubsection{Equality tests} \label{sec-eq}
We start with the instructions for the leaves of $T_\Phi$, where the truth of a statement $s=t$ has to be checked.
Here, $s$ is a partial sum of the $w_i$.

For $b\in\{0,1\}$, define an admissible map $h_b : [0,4] \to [0,4]$ by
\begin{equation}\label{eq:local-equality}
 h_b(0)=0,\qquad h_b(1)=h_b(2)=h_b(3)=1+b,\qquad h_b(4)=4.
\end{equation}
For $z=4q+r<M$, $0\leq r<4$, let
\begin{equation}\label{eq:equality}
 \Eq(4q+r)=4q+h_{\ind{\lfloor4q/B\rfloor=t}}(r),
 \qquad \Eq(M)=M.
\end{equation}
Here, $\lfloor4q/B\rfloor=t$ is seen as a Boolean statement whose truth value $\ind{\lfloor4q/B\rfloor=t}$ depends only on $q$.

 \Cref{lem:lifting}, with $D=4$, 
proves that $\Eq$ is admissible.
On a state of the form \eqref{eq:state} with $r=1$, $\Eq$ yields
\begin{equation}\label{eq:equality-action}
 \Eq\biggl(Bs+\sum_{j=1}^n C_jb_j+1\biggr)
 =Bs+\sum_{j=1}^nC_jb_j+1+\ind{s=t}.
\end{equation}
To verify this equation, put $a=\sum_j C_jb_j$. Then $4\!\mid\! (Bs+a)$ and
$0\leq a\leq B-4$ by \eqref{eq:weights}. Hence, the unique interval $[4q,4q+3]$
containing the input $Bs+a+1$ has left endpoint
$4q=Bs+a$ and $\lfloor4q/B\rfloor=s$. Note that the partial sum $s$ and all stored bits $b_j$
are not changed by the instruction $\Eq$.

\begin{remark}
Although $s\mapsto\ind{s=t}$ need not be monotone, the map $\Eq$ is
monotone. The reason is that it retains the partial sum and the saved
bits, and changes only the least significant component (the $r$ in \eqref{eq:state}).
\end{remark}

\subsubsection{Addition and subtraction}

While traversing the tree $T_\Phi$ we have to add and subtract weights $w_i$ to the state component $s$ in  \eqref{eq:state}.
This will be done by the following admissible functions $\Add_i$ and $\Sub_i$ ($i \in [1,n]$), where $z \in [1,M-1]$:
\begin{align}
 \Add_i(0)&=0,&  \Add_i(M)&=M,&  \Add_i(z)&=\min\{z+Bw_i,M-1\},
 \label{eq:add}\\
 \Sub_i(0)&=0,& \Sub_i(M)&=M,& \Sub_i(z)&=\max\{z-Bw_i,1\}.
 \label{eq:subtract}
\end{align}
Note that $0 \le Bw_i\leq BS=M-B\leq M-4$.
The application of $\min$ and $\max$  in \eqref{eq:add} and \eqref{eq:subtract}
is needed to obtain admissible functions. When applying $\Add_i$ and $\Sub_i$
in the instruction sequence for evaluating $T_\Phi$, we will always have
$\Add_i(z) = z+Bw_i$ and $\Sub_i(z) = z-Bw_i$.

\subsubsection{Saving and combining branch results} \label{sec-save-ret}

Recall the saved bits $b_i$ in the  generic state $z$ from  \eqref{eq:state}.
These bits will be set by instructions $\Save_i$ and $\Ret_i$.

Fix $i$ and recall the definition of $C_i$ and $D_i=2C_i$ from \eqref{eq:parameters}.
Note that $D_i\!\mid\! M$. Define
\begin{equation}\label{eq:sigma}
 \sigma_i(y)=
 \begin{cases}
 0 & \text{if } y=0,\\
 1 & \text{if } y=1,\\
 C_i+1 & \text{if } 2\leq y<D_i,\\
 D_i & \text{if } y=D_i.
 \end{cases}
\end{equation}
Also put
\begin{equation}\label{eq:theta}
 \theta_i=
 \begin{cases}
 2 & \text{if } Q_i=\exists,\\
 C_i+2 & \text{if } Q_i=\forall,
 \end{cases}
 \qquad
 \tau_i(y)=
 \begin{cases}
 0 & \text{if } y=0,\\
 1 & \text{if } 1\leq y<\theta_i,\\
 2 &  \text{if } \theta_i\leq y<D_i,\\
 D_i & \text{if } y=D_i.
 \end{cases}
\end{equation}
Since $C_i+1 < D_i$ and $D_i > 2$, 
the maps $\sigma_i$ and $\tau_i$ are both admissible. Lift them periodically to $[0,M]$, where
$q \in [0, M/D_i-1]$ and $y \in [0,D_i-1]$:
\begin{equation}\label{eq:lift}
 \Save_i(qD_i+y)=qD_i+\sigma_i(y), \quad
 \Ret_i(qD_i+y)=qD_i+\tau_i(y),
\end{equation}
and set $\Save_i(M) = \Ret_i(M)=M$.
These liftings are admissible by \Cref{lem:lifting}.

For every $\alpha\in\{0,D_i,2D_i,\ldots,M-D_i\}$ and  $b \in \{0,1\}$,
\begin{equation}\label{eq:save-action}
 \Save_i(\alpha+1+b)= \alpha + \sigma_i(1+b) = 
 \alpha+C_ib+1.
\end{equation}
Thus $\Save_i$ stores the bit $b$ in the $i$-th saved bit $b_i$ from \eqref{eq:state} and resets
the result (i.e., sets $r=1$ in \eqref{eq:state}). 
Here, we should remark that whenever we apply $\Save_i$ in our final instruction sequence, the state
$z$ will have the above form $\alpha+1+b$ with $\alpha\in\{0,D_i,2D_i,\ldots,M-D_i\}$ and  $b \in \{0,1\}$.
The bit $b$ in \eqref{eq:save-action} should be seen as the truth value of the left child of the current tree node 
during the traversal of $T_\Phi$. When the truth value $b' \in \{0,1\}$ of the second child node is also determined, the 
current state will have the form $\alpha + C_ib+1 + b'$ with the same  $\alpha\in\{0,D_i,2D_i,\ldots,M-D_i\}$ as in \eqref{eq:save-action}.
Note that $C_ib+1 + b' \le C_i+2 < 2C_i = D_i$ (since $C_i > 2$).
The following table shows the value of $\tau_i(C_ib+1 + b')$ for the four possible combinations of $(b,b')$:
\begin{center}
\begin{tabular}{ccccc}
\toprule
$(b,b')$ & $(0,0)$ & $(0,1)$ & $(1,0)$ & $(1,1)$\\
\midrule
$C_ib+1 + b'$ & $1$ & $2$ & $C_i+1$ & $C_i+2$\\
$Q_i = \exists$ & $1$ & $2$ & $2$ & $2$\\
$Q_i = \forall$ & $1$ & $1$ & $1$ & $2$\\
\bottomrule
\end{tabular}
\end{center}
Consequently, for every $\alpha\in\{0,D_i,2D_i,\ldots,M-D_i\}$ and  $b \in \{0,1\}$ we have
\begin{equation}\label{eq:return-action}
 \Ret_i(\alpha+C_ib+1+b')=
 \begin{cases}
 \alpha+1+(b\vee b')& \text{if } Q_i=\exists,\\
 \alpha+1+(b\wedge b')& \text{if } Q_i=\forall.
 \end{cases}
\end{equation}
In particular, the $i$-th saved bit is cleared again.

\subsection{Recursive evaluation}\label{sec:program}

An instruction word (i.e., a sequence of admissible functions on $[0,M]$)
is executed from left to right. For such a word
$F=f_1\cdots f_\ell$, write
$\evalprog{F}=f_1\circ\cdots\circ f_\ell : [0,M] \to [0,M]$. Here, the empty instruction word
yields the identity mapping. Note that
$\evalprog{FG}=\evalprog{F}\circ\evalprog{G}$.
Each instruction below is one of the admissible maps from Sections~\ref{sec-eq}--\ref{sec-save-ret}.
Define inductively the instruction words
\begin{equation}\label{eq:program}
 E_{n+1}=\Eq,\qquad
 E_i=E_{i+1}\,\Save_i\,\Add_i\,E_{i+1}\,\Sub_i\,\Ret_i
 \quad(1\leq i\leq n).
\end{equation}
These equations form an SLP of size $\mathcal{O}(n)$ over an instruction alphabet
of size $4n+1$. 
For a partial sum $s \in [0,S]$, define the Boolean value
\begin{equation}\label{eq:suffix-truth}
 F_i(s)= \ind{Q_ix_i\in\{0,1\}\cdots Q_nx_n\in\{0,1\} : s+\sum_{j=i}^n x_jw_j=t}.
  \end{equation}
In particular, $F_{n+1}(s)=\ind{s=t}$.

\begin{lemma}\label{lem:evaluation}
For $1\leq i\leq n+1$, $0\leq s\leq S-\sum_{j=i}^n w_j$, and $b_1,\ldots,b_{i-1} \in \{0,1\}$ we have
\begin{equation}\label{eq:evaluation-invariant}
\evalprog{E_i}\biggl( Bs+\sum_{j=1}^{i-1} C_jb_j +1 \biggr) = Bs+\sum_{j=1}^{i-1} C_jb_j +1+F_i(s).
\end{equation}
\end{lemma}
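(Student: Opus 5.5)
We prove the lemma by downward induction on $i$, from $i=n+1$ to $i=1$. Throughout, write $a=\sum_{j=1}^{i-1}C_jb_j$ and $z_0=Bs+a+1$.

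\textbf{Base case.} For $i=n+1$ we have $E_{n+1}=\Eq$. The state $z_0$ has the form \eqref{eq:state} with all bits $b_j$ for $j\ge i$ equal to zero and $r=1$. Hence \eqref{eq:equality-action} directly gives $\Eq(z_0)=z_0+\ind{s=t}=z_0+F_{n+1}(s)$.

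\textbf{Inductive step.} For $i\le n$, we trace the six instructions of $E_i=E_{i+1}\,\Save_i\,\Add_i\,E_{i+1}\,\Sub_i\,\Ret_i$ one at a time.

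\begin{enumerate}
\item Since $s\le S-\sum_{j=i}^n w_j\le S-\sum_{j=i+1}^n w_j$, the induction hypothesis for $i+1$ applies with bits $b_1,\dots,b_{i-1}$ and $b_i=0$. It yields $z_1=z_0+b$, where $b=F_{i+1}(s)$.

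\item To apply \eqref{eq:save-action}, we write $Bs+a=\alpha$ with $D_i\mid\alpha$. This holds because $D_i\mid B$, and for $j<i$ we have $C_j=2^{i-j}C_i$, which is a multiple of $D_i=2C_i$. So $\Save_i(z_1)=\alpha+C_ib+1$. This state is again of the form \eqref{eq:state}, with the $i$-th bit equal to $b$.

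\item $\Add_i$ gives $z_3=B(s+w_i)+a+C_ib+1$. No truncation by the $\min$ occurs, because $s+w_i\le S$ and by \eqref{eq:state-bound} the result is at most $M-2$.

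\item We apply the induction hypothesis for $i+1$ again, now with partial sum $s+w_i$ and bits $b_1,\dots,b_{i-1},b$. This is allowed because $s+w_i\le S-\sum_{j=i+1}^n w_j$. It adds $b'=F_{i+1}(s+w_i)$.

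\item $\Sub_i$ subtracts $Bw_i$ exactly, since the result is still at least $1$. This gives $\alpha+C_ib+1+b'$.

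\item Finally, \eqref{eq:return-action} yields $\alpha+1+(b\vee b')$ if $Q_i=\exists$, and $\alpha+1+(b\wedge b')$ if $Q_i=\forall$.
\end{enumerate}

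\textbf{Matching the truth value.} It remains to check that $b\vee b'$ (respectively $b\wedge b'$) equals $F_i(s)$. From definition \eqref{eq:suffix-truth}, the case $x_i=0$ gives $F_{i+1}(s)$ and the case $x_i=1$ gives $F_{i+1}(s+w_i)$. The quantifier $Q_i$ combines these two by disjunction (for $\exists$) or conjunction (for $\forall$), which is exactly $F_i(s)$.

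\textbf{Main obstacle.} The real work is bookkeeping. We must verify that each invocation of the inductive hypothesis and of \eqref{eq:save-action} and \eqref{eq:return-action} meets its side conditions:
\begin{itemize}
\item the divisibility $D_i\mid Bs+a$;
\item the bounds on $s$ and $s+w_i$ that keep $\Add_i$ and $\Sub_i$ free of clipping;
\item the requirement that the bits $b_j$ for $j>i$ are zero at each call.
\end{itemize}
The last point holds because the induction hypothesis returns the state with those bits unchanged. Here the form of the lemma matters: its statement fixes only $b_1,\dots,b_{i-1}$, so it can be applied at level $i+1$ with $b_i$ set to the required value ($0$ in step 1, $b$ in step 4).
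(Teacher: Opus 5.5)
Your proof is correct and follows the paper's argument essentially step for step: downward induction on $i$, with $\alpha=Bs+\sum_{j<i}C_jb_j$ shown divisible by $D_i$, and the six instructions of $E_i$ traced under the same side conditions for $\Add_i$, $\Sub_i$ and the two inductive calls. The one side condition you leave implicit is $\alpha\le M-D_i$, which \eqref{eq:save-action} and \eqref{eq:return-action} require; the paper obtains it from $\sum_{j<i}C_j=B-D_i$, and it also follows at once from $\alpha+1\le M-2$ (by \eqref{eq:state-bound}) together with $D_i\mid\alpha$ and $D_i\mid M$.
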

\begin{proof}
The proof is by downward induction on $i$. At $i=n+1$,
\eqref{eq:evaluation-invariant} is precisely
\eqref{eq:equality-action}.
For the induction step, set
$\beta_0=F_{i+1}(s)$ and $\beta_1=F_{i+1}(s+w_i)$.
We use the abbreviation
\[
 \alpha=Bs+\sum_{j=1}^{i-1} C_jb_j.
\]
Since $B$ and every $C_j$ with $j<i$ are divisible by $D_i = 2C_i$, so is
$\alpha$. Moreover,
\[
 \sum_{j=1}^{i-1} C_j =  \sum_{j=1}^{i-1} 2^{n+2-j} =
 \sum_{j=1}^{i-1} 2^{n+2-i+j} 
 = 2^{n+3-i} (2^{i-1}-1) = 2^{n+2} - 2^{n+3-i} = B-D_i
\]
and hence $0\leq\alpha\leq BS+B-D_i=M-D_i$.

We have to show that $\evalprog{E_i}(\alpha+1) = \alpha+1+F_i(s)$.
For this we will apply the instructions 
$\evalprog{E_{i+1}}$, $\evalprog{\Save_i}$, $\evalprog{\Add_i}$, $\evalprog{E_{i+1}}$,
$\evalprog{\Sub_i}$ and $\evalprog{\Ret_i}$ in this order to $\alpha+1$.
For the first call $\evalprog{E_{i+1}}(\alpha+1)$, the induction hypothesis is applicable because
\[
 s\leq S-w_i-\sum_{j=i+1}^n w_j
    \leq S-\sum_{j=i+1}^n w_j.
\]
The additional saved bit $b_i$ is initially zero. We obtain $\evalprog{E_{i+1}}(\alpha+1) = \alpha+1+\beta_0$.
Since $\alpha \leq M-D_i$ is a multiple of $D_i$, \eqref{eq:save-action} 
yields
$\Save_i(\alpha+1+\beta_0) = \alpha+C_i\beta_0+1$.
Note that 
\[\alpha+B w_i +C_i\beta_0+1 = B(s+w_i) + \sum_{j=1}^{i-1}C_jb_j + C_i\beta_0+1 \leq M-2 < M-1 \]
 by \eqref{eq:state-bound}.
Hence, by  \eqref{eq:add} we get $\Add_i(\alpha+C_i\beta_0+1) = \alpha+B w_i +C_i\beta_0+1$.

For the second recursive call $\evalprog{E_{i+1}}(\alpha+B w_i +C_i\beta_0+1)$ 
the induction hypothesis applies again: We have 
$\alpha+B w_i +C_i\beta_0+1 =  B(s+w_i) + \sum_{j<i}C_jb_j + C_i\beta_0+1$ and
\[
 s+w_i\leq S-\sum_{j=i+1}^n w_j.
\]
We obtain $\evalprog{E_{i+1}}(\alpha+B w_i +C_i\beta_0+1) = \alpha+Bw_i+C_i\beta_0+1+\beta_1$.
By  \eqref{eq:subtract} and $(\alpha+Bw_i+C_i\beta_0+1+\beta_1)-Bw_i \geq 1$ we next get
\[
\Sub_i(\alpha+Bw_i+C_i\beta_0+1+\beta_1) = \alpha+C_i\beta_0+1+\beta_1.
\]
Finally, applying $\Ret_i$ yields by \eqref{eq:return-action}
\[ 
\Ret_i(\alpha+C_i\beta_0+1+\beta_1) = \begin{cases}
 \alpha+1+(\beta_0\vee\beta_1)& \text{if } Q_i=\exists,\\
 \alpha+1+(\beta_0\wedge\beta_1)& \text{if }  Q_i=\forall.
 \end{cases}
 \]
 Here is the entire execution sequence again:
\begin{eqnarray*}
 \alpha+1
 &\xmapsto{E_{i+1}} & \alpha+1+\beta_0\\
 &\xmapsto{\Save_i} & \alpha+C_i\beta_0+1\\
 &\xmapsto{\Add_i} & \alpha+Bw_i+C_i\beta_0+1\\
 &\xmapsto{E_{i+1}} & \alpha+Bw_i+C_i\beta_0+1+\beta_1\\
 &\xmapsto{\Sub_i} & \alpha+C_i\beta_0+1+\beta_1\\
 &\xmapsto{\Ret_i} &
 \begin{cases}
 \alpha+1+(\beta_0\vee\beta_1),&Q_i=\exists,\\
 \alpha+1+(\beta_0\wedge\beta_1),&Q_i=\forall.
 \end{cases}
\end{eqnarray*}
By the semantics of $Q_i$, the final state is
$\alpha+1+F_i(s)$. 
Note that every intermediate number that arises in the above calculation belongs to $[1,M-1]$.
\end{proof}
For $s=0$ and $i=1$, \Cref{lem:evaluation} yields
\begin{equation}\label{eq:root}
 \evalprog{E_1}(1)=1+F_1(0)=1+\ind{\Phi}.
\end{equation}
If $\ell_i=|E_i|$, then $\ell_{n+1}=1$ and
$\ell_i=2\ell_{i+1}+4$. Therefore
\begin{equation}\label{eq:program-length}
 \ell:=|E_1|=5\cdot2^n-4.
\end{equation}

\subsection{Straight-line programs for the instruction codes}\label{sec:codes}

It remains to show that for every instruction map $f$ from Sections~\ref{sec-eq}--\ref{sec-save-ret} one can
compute in polynomial time an SLP for
the code $\blk_M(f)$. Since the suffix $(01)^{M-1} 0$ of every code $\blk_M(f)$ can be easily 
produced by a small SLP, it suffices to consider the word $\enc_M(f)$.
For each of these words we will construct an SLP
that uses a constant number of concatenations and powers with $\mathcal{O}(\log M)$-bit exponents.
As explained in \Cref{sec:slp}, each power can be replaced by
ordinary SLP-productions. 

\subsubsection{Lifting preserves compression}
Suppose $D\geq2$ divides $M$ and, on consecutive intervals of length $D$, the map $f$ has
the form
\[
 f(qD+r)=qD+h_q(r)  \quad (r \in [0,D-1]),
\]
where every $h_q$ is admissible on $[0,D]$ and $f(M)=M$. The successive differences in
\eqref{eq:code} give
\begin{equation}\label{eq:code-lift}
 \enc_M(f)=\enc_D(h_{M/D-1})\cdots\enc_D(h_1)\enc_D(h_0).
\end{equation}
To see this, let $j=qD+r$ with $1\leq r\leq D$. Then
\[
 f(j)-f(j-1)=h_q(r)-h_q(r-1).
\]
This identity also holds for $r=D$, since 
\begin{eqnarray*}
f(qD+D)-f(qD+D-1) & = & f((q+1)D)-f(qD+D-1) \\
& = & (q+1)D - qD - h_q(D-1) \\
& =  & D - h_q(D-1) \\
& = & h_q(D) - h_q(D-1).
\end{eqnarray*}
 This proves~\eqref{eq:code-lift}.
If all $h_q$ are equal to $h$, then \eqref{eq:code-lift} specializes to
\begin{equation}\label{eq:uniform-code}
 \enc_M(f)=\enc_D(h)^{M/D}.
\end{equation}

\subsubsection{Equality tests}
For the maps in~\eqref{eq:local-equality}, put
\begin{equation}\label{eq:small-equality-codes}
 L_0=\enc_4(h_0)=11100010,\qquad
 L_1=\enc_4(h_1)=11000110.
\end{equation}
Consider now the function 
$\Eq$ from \eqref{eq:equality}, where $q$ ranges over $[0,M/4-1] = [0,A(S+1)-1]$.
Consider the Boolean statement $\lfloor4q/B\rfloor=t$ whose truth value only depends on $q$. 
Since $B = 4A$ it is equivalent to $\lfloor q/A\rfloor=t$. 
For $q \in \{0, \ldots, tA-1\}$ ($tA$ values) $\lfloor q/A\rfloor=t$ is false, for 
$q \in \{tA,\ldots, tA + A-1\}$ ($A$ values) $\lfloor q/A\rfloor=t$ is true, and for 
$q \in \{(t+1)A,\ldots, A(S+1)-1\}$ ($(S-t)A$ values) $\lfloor q/A\rfloor=t$ is false.
As a consequence, we obtain
\begin{equation}\label{eq:equality-code}
 \enc_M(\Eq)=L_0^{(S-t)A}\,L_1^A\,L_0^{tA}.
\end{equation}
Here the order of the $L_b$-powers is due to the descending order of the arguments for $f$ in \eqref{eq:code}. 

\subsubsection{Addition and subtraction}
With $a = Bw_i \in [0,M-4]$, direct calculation of the successive differences
in~\eqref{eq:add} and \eqref{eq:subtract} yields
\begin{align}
 \enc_M(\Add_i)=10\,0^a\,(10)^{M-a-2}\,1^{a+1}0,
 \label{eq:add-code}\\
 \enc_M(\Sub_i)=1^{a+1}0\,(10)^{M-a-2}\,0^a\,10.
 \label{eq:sub-code}
\end{align}
The successive differences of $\Add_i$ are:
$\Add_i(1)-\Add_i(0)=a+1$,
$\Add_i(z)-\Add_i(z-1)=1$ for $2 \le z \le M-a-1$, 
$\Add_i(z)-\Add_i(z-1)=0$ for $M-a \le z \le M-1$
and $\Add_i(M)-\Add_i(M-1)=1$.
Listing these differences in reverse
order gives~\eqref{eq:add-code}. 
For $\Sub_i$, the differences are 
$\Sub_i(1)-\Sub_i(0)=1$,
$\Sub_i(z)-\Sub_i(z-1)=0$ for $2 \le z \le a+1$, 
$\Sub_i(z)-\Sub_i(z-1)=1$ for $a+2 \le z \le M-1$
and $\Sub_i(M)-\Sub_i(M-1)=a+1$.

\subsubsection{Saving and returning}
With $C=C_i$, $D=D_i$, and $\theta=\theta_i$, the only nonzero
successive differences of $\sigma_i$ are
$\sigma_i(1)-\sigma_i(0)=1$, $\sigma_i(2)-\sigma_i(1)=C$, and $\sigma_i(D)-\sigma_i(D-1)=C-1$.
All other differences are zero. Therefore 
\[\enc_D(\sigma_i)=1^{C-1}0^{D-2}1^C0\,10.\]
For $\tau_i$, the nonzero differences are
$\tau_i(1)-\tau_i(0)=1$, $\tau_i(\theta)-\tau_i(\theta-1)=1$, and $\tau_i(D)-\tau_i(D-1)=D-2$,
giving 
\[\enc_D(\tau_i)=1^{D-2}0^{D-\theta}1\,0^{\theta-1}10.\]
Note that since $C\geq4$ and
$2\leq\theta\leq C+2<D$, all exponents are nonnegative.
Finally, for $\Save_i$ and $\Ret_i$ we obtain
$\enc_M(\Save_i)=\enc_D(\sigma_i)^{M/D}$ and
$\enc_M(\Ret_i)=\enc_D(\tau_i)^{M/D}$.

\subsubsection{Computing straight-line programs for the codes}

There are $4n+1$ instruction maps $f$. The code $\blk_M(f)$ for each instruction map $f$
is obtained by a constant
number of concatenations and exponentiations $u^K$ where $K$ has $\mathcal{O}(\log M)$ bits.
Hence, one can compute in polynomial time SLPs for all codes $\blk_M(f)$. The total number
of productions is $\mathcal{O}(n \log M) = \mathcal{O}(n^2 + n \log S)$.

\subsection{The final reduction}\label{sec:reduction}

We now construct the final SLPs and complete the proof of
\Cref{thm:main}.

\begin{proof}[Proof of \Cref{thm:main}]
Membership in $\PSPACE$ is stated in  \cite[Proposition~1]{LifshitsLohrey2006}.
For hardness, consider the formula~\eqref{eq:source} and use the parameters~\eqref{eq:parameters} and the instruction
SLP~\eqref{eq:program}.

For every instruction $f$, let $Z_f$ be the start variable of
an SLP for $\blk_M(f)$ as constructed in \Cref{sec:codes}. Introduce
additional variables $Y_i$ with the productions
\begin{equation}\label{eq:binary-grammar}
 \begin{aligned}
 Y_{n+1}&\longrightarrow Z_{\Eq},\\
 Y_i&\longrightarrow
 Y_{i+1}\,Z_{\Save_i}\,Z_{\Add_i}\,
 Y_{i+1}\,Z_{\Sub_i}\,Z_{\Ret_i}
 \quad(1\leq i\leq n).
 \end{aligned}
\end{equation}
Let $\ell=5\cdot2^n-4$ (see \eqref{eq:program-length}) and finally produce in polynomial time SLPs for
\begin{equation}\label{eq:output}
 u=(0^M1^M)^\ell00,\qquad v=0\,\val(Y_1).
\end{equation}
The SLP for $v$ has polynomial size by \Cref{sec:codes} and
\eqref{eq:binary-grammar}. 
The pattern $u$ has an SLP of size
$\mathcal{O}(\log M+\log\ell)$. 
By \eqref{eq:final-threshold} and~\eqref{eq:root} we have
\[
 u\embed v
 \quad\Longleftrightarrow\quad
 \evalprog{E_1}(1)\geq2
 \quad\Longleftrightarrow\quad
 \Phi\text{ is true}.
\]
This proves $\PSPACE$-hardness. 
\end{proof}

\subsection{An example}\label{sec:example}

\begin{example}
Consider the formula $\forall x_1\in\{0,1\}\ \exists x_2\in\{0,1\}:
 x_1+x_2=1$.
Here $n=2$, $S=2$, $B=16$, $M=48$, $C_1=8$, $C_2=4$, and
$\ell=16$. The sentence is true: the existential choice can be
$x_2=1-x_1$. At the root, $E_2$ returns true for both partial sums
zero and one. The step-by-step computation of $\evalprog{E_1}(1)$ is
\[
 1\xmapsto{E_2}2
  \xmapsto{\Save_1}9
  \xmapsto{\Add_1}25
  \xmapsto{E_2}26
  \xmapsto{\Sub_1}10
  \xmapsto{\Ret_1}2.
\]
For the universal return, the threshold is $C_1+2=10$.
The output pattern is $(0^{48}1^{48})^{16}00$, and it embeds into the
constructed text. If the two quantifiers are reversed, the sentence is
false. Then both universal child calls return false and the step-by-step computation of $\evalprog{E_1}(1)$
is
\[
 1\xmapsto{E_2}1
  \xmapsto{\Save_1}1
  \xmapsto{\Add_1}17
  \xmapsto{E_2}17
  \xmapsto{\Sub_1}1
  \xmapsto{\Ret_1}1.
\]
The same pattern $(0^{48}1^{48})^{16}00$ does not embed into the corresponding text. 
\end{example}

\subsection*{Disclosure of AI use}

\Cref{thm:main} was obtained with the help of GPT-6 Astra. The author independently checked and simplified
the proof and takes full responsibility for the final content.


\end{document}